\documentclass[american,a4paper,runningheads,envcountsect]{llncs}

\usepackage{babel}
\usepackage[utf8]{inputenc}
\usepackage[T1]{fontenc}

\usepackage{amsmath,amsthm,amssymb}
\usepackage{mathtools}
\usepackage{mathrsfs}
\usepackage{fca}
\usepackage{dsfont} % For double stroke math digits

\usepackage[colorlinks,citecolor=blue,urlcolor=black,hidelinks,linktocpage]{hyperref}
\usepackage[all]{hypcap}
\usepackage{cleveref}
\let\cref\Cref

\newtheorem{thm}{Theorem}[section]

\newtheorem{lem}[thm]{Lemma}

\theoremstyle{definition}

\theoremstyle{remark}

\renewcommand{\epsilon}{\varepsilon}
\renewcommand{\phi}{\varphi}

\newcommand{\ie}{i.\,e.\xspace}

\newcommand{\eg}{e.\,g.\xspace}
\usepackage[shrink=45,babel,kerning=true]{microtype}
\usepackage{csquotes}
\usepackage{booktabs}
\usepackage{paralist}
\usepackage[subtle]{savetrees}
\usepackage{todonotes}
\presetkeys{todonotes}{color=blue!5}{}

\usepackage[backend=bibtex,style=numeric-comp,doi=false,isbn=false,%
url=false]{biblatex}
\addbibresource{paper.bib}
% Use just in case
% \renewcommand{\bibfont}{\small}

\newcommand\blfootnote[1]{%
  \begingroup
  \renewcommand\thefootnote{}\footnote{#1}%
  \addtocounter{footnote}{-1}%
  \endgroup
}

\begin{document}

\title{Clones in Graphs}

\author{Stephan Doerfel\inst{1} \and Tom Hanika\inst{2, 3} \and Gerd Stumme\inst{2, 3}}

\date{\today}

\institute{%
  Micromata GmbH
  Kassel, Germany\\[0.5ex]
  \and
  Knowledge \& Data Engineering Group,
  University of Kassel, Germany\\[0.5ex]
  \and
  Interdisciplinary Research Center for Information System Design\\
  University of Kassel, Germany\\[0.5ex]
  \email{stephan.doerfel@doerfel.info, tom.hanika@cs.uni-kassel.de,
    stumme@cs.uni-kassel.de}
}
\maketitle

\blfootnote{Authors are given in alphabetical order.
  No priority in authorship is implied.}

\begin{abstract}
  Finding structural similarities in graph data, like social networks,
  is a far-ranging task in data mining and knowledge discovery. A
  (conceptually) simple reduction would be to compute the automorphism
  group of a graph. However, this approach is ineffective in data
  mining since real world data does not exhibit enough structural
  regularity. Here we step in with a novel approach based on mappings
  that preserve the maximal cliques. For this we exploit the well
  known correspondence between bipartite graphs and the data structure
  formal context $(G,M,I)$ from Formal Concept Analysis. From there we
  utilize the notion of clone items. The investigation of these is
  still an open problem to which we add new insights with this
  work. Furthermore, we produce a substantial experimental
  investigation of real world data. We conclude with demonstrating the
  generalization of clone items to permutations.
\end{abstract}

\keywords{Social~Network~Analysis, Formal~Concept~Analysis, Clones}

\section{Introduction}
\label{sec:introduction}

%Story
%identification of structural similar entities
%reason: a) discovery itself, e.g. SNA, b) size reduciton (redundant elements)
%specific case of bipartite data
%suitable for qualitiative analysis with FCA
%notion of clones suggests itself
%follow up on open problem
%C1)formulate and proof properties of clones in the notions of FCA
%C2)investigate a variety of public realworlds datasets
%    surprising result: clones rarely exist
%C3) observations that clones only cover a very restrictive type of structural similarity
%    propose a generalized notion, opening the gate for a lare variety of clone types
%    applicable for various levels of similarity ...

The identification of (structural) similar entities in graph data sets is a particularly relevant task in data analysis: it provides insights
into entities in the data (\eg, in members of social networks); it
allows grouping entities and even reducing data sets by removing redundant (structurally equivalent) elements (factorization).
For \emph{bipartite} graph data, a notion of structural similarity
that suggests itself is that of \emph{clone items}, known from the
realm of Formal Concept Analysis (FCA). The latter is a mathematical
toolset for qualitative data analysis, relying on algebraic notions
s.a.\ lattices and closure systems. Here, clone items are entities from the same partition that are completely interchangeable within the family of that partition's closed subsets.

In this paper, we follow up on a long-standing open problem of FCA,
collected at
ICFCA~2006,\footnote{\url{http://www.upriss.org.uk/fca/problems06.pdf}}
regarding the meaning of clone items in real world graph data.
The notion of clones was initially
proposed\footnote{This work is noted to be submitted (e.\,g, in~\cite{Gely05}), but has never been published.}
in ``Clone items: a pre-processing information for knowledge discovery'' by
R. Medina and L. Nourine. %The structural idea is tempting and has become a well
%considered notion in formal concept analysis (FCA) since then.
Subsequently, a plethora of desirable properties of clone items has been shown, such as,
%Later, the authors of~\cite{Gely05}
%unraveled some of the
\textquote{hidden combinatorics}~\cite{Gely05} that allow  %by showing
%possible %concept lattice
factorizations of data structures containing clones, %  Even very simple formal contexts may
% have an ample amount of clones. Those observations led to
computational properties
investigations, like~\cite{Medina05}, or the use of clones in % . Also theoretical applications for
%
% While it has been established
% %For the latter, the notion of \emph{clone items} has been suggested, which has been adapted by the FCA community, who investigated
% that clone items have interesting computational properties~\cite{Medina05} and that they are useful for such tasks as
association rule mining~\cite{Medina06}. % or simplifying
%  in a family of sets $\mathcal{F}$ on some set $M$ are
% pairs of elements of $M$ that leave $\mathcal{F}$ invariant when being
% interchanged in every element of $\mathcal{F}$.
Finally, the question of semantics was addressed by~\cite{Macko12}, who investigated clones in three well-known data sets
% All this strikes the question for applications in real world data. The first
% (and only) attempt the authors are aware of was made in~\cite{Macko12}. There,
% three different data sets, in particular
(\emph{Mushroom, Adults,} and \emph{Anonymous} from the UCI Machine Learning Repository~\cite{UCI}).
%, were examined.
%The observations were sobering, in fact,
Following the observation that two data sets were free of clones whereas the
mushroom data set had only few, \cite{Macko12} introduced \emph{nearly clones} relying rather on statistical
than on structural properties.
However,
despite these previous efforts, the question -- are clone
items frequent in natural graph data sets -- in particular in social network data --
has not yet been answered in general.

The contributions of this paper are threefold:
First, we provide a prove for the
characterization of clone items on the level of formal contexts that allows us to easily compute clone items in
large data sets.
Second, we investigate a diverse variety of public realworld data sets
coming from different domains and exhibiting different properties. We
show that clones are not common in these data sets and conclude that in
their present form, clones are not as useful as one would have hoped,
regarding the efforts made in previous literature.
Third, to
%C1)formulate and proof properties of clones in the notions of FCA
%C2)investigate a variety of public realworlds datasets
%    surprising result: clones rarely exist
%C3) observations that clones only cover a very restrictive type of structural similarity
%    propose a generalized notion, opening the gate for a lare variety of clone types
%    applicable for various levels of similarity ...
%
%
% Taking advantage of the strong correspondence between bipartite social networks and
% formal contexts, we investigate nine well-known social networks of different sizes
% for clone items. To complement this we consider two non-social network data
% sets as well. Building on this, we present insights on why clones are not common in social
% networks using a characterization of clones on the level of the formal context.
% We conclude that clones are not suitable subjects of investigation in social network data.
%
%To
resolve this dilemma, we point out a more general notion of clones.
For this we fall back to permutations on the
set of attributes in a formal context, providing a natural extension of the
clone property. These \emph{higher order clones} are able to identify more
complicated \textquote{clone structures} and should be the next step in
the investigation of relational data structures.

This work is structured as follows. In
Section~\ref{sec:form-conc-analys}, we recall basic notations of FCA
and show the correspondence to graphs. Then, in
Section~\ref{sec:theor-observ}, we provide a characterization of clone
items on the level of formal contexts. Following this, in
Section~\ref{sec:clon-soci-netw} we demonstrate how the notion of
clones can be applied in the realm of graphs. Subsequent to
experiments on various data sets, %in Section~\ref{sec:experiments},
in Section~\ref{sec:higher-order-clones}, we extend the notion of
clone items to higher order clones. Eventually, we conclude our work
with Section~\ref{sec:conclusion}.

\section{Preliminaries}
\label{sec:form-conc-analys}
We give a short recollection of the ideas from formal concept analysis
as introduced in~\cite{Wille1982, fca-book} that are relevant in this
work. We use the common presentation of formal contexts by
$\context=\GMI$, where $G$ and $M$ are sets and
$I\subseteq G\times M$. The elements of $G$ are called objects, those
of $M$ are called attributes, and $(g,m)\in I$ signifies that object
$g$ has the attribute $m$. The correspondence to a bipartite graph
(network) is at hand. Let $H=(U\cup W,E)$ be such an undirected
bipartite graph with $U\cap W=\emptyset$ where $U$ is a set of
entities (often users), $W$ some set of common properties, and
$E\subseteq\{\{u,w\}\mid u\in U,w\in W\}$ the set of edges between $U$
and $W$.  There are two natural ways of identifying $H$ as a formal
context. In the following, we choose $\context(H)=(U,W,I)$ as the to
$H$ associated formal context,\footnote{The second way yields the dual
  context $\context(H)=(W,U,I)$.} where for $u\in U$ and $w\in W$, we
have $(u,w)\in I:\Leftrightarrow \exists e\in E:u\in e\wedge w\in
e$. For the case of a non-bipartite Graph $G=(V,E)$ we simply
construct the formal context by $\context=(V,V,I)$ with
$(u,v)\in I\Leftrightarrow \{u,v\}\in E$ for all $u,v\in V$. In the
following we use the terms network, (bipartite) graph, and formal
context interchangeably in the sense above.

We will utilize the common \emph{derivation} operators
$\cdot'\colon\mathcal{P}(G)\to\mathcal{P}(M), A\mapsto B\coloneqq\{m\in M\mid
\forall g\in A\colon (g,m)\in I\}$ and
$\cdot'\colon\mathcal{P}(M)\to\mathcal{P}(G), B\mapsto A\coloneqq\{g\in G\mid
\forall m\in B\colon (g,m)\in I\}$. Having those operations we call a formal
context $\context=\GMI$ \emph{object clarified} iff
$\forall g,h\in G, g\neq h: g'\neq h'$, \emph{attribute clarified} iff
$\forall m,n\in M,m\neq n:m'\neq n'$ and \emph{clarified} iff it is both. In
this definition we used $g'$ as shorthand for $\{g\}'$. Clarification will later
on correspond to a particular trivial kind of clones. Similarly we call a
clarified context $\context$ \emph{object reduced} if for all $g\in G$ there is
no $S\subseteq G\setminus\{g\}$ such that $g'=S'$. We call $\context$
\emph{attribute reduced} iff for all $m\in M$ there is no $S\subseteq
M\setminus\{m\}$ such that $m'=S'$. And, we call this $\context$ \emph{reduced}
iff $\context$ is attribute and object reduced.

A pair $(A,B)$ where $A\subseteq G$, $B\subseteq M$ with $A'=B$ and $B'=A$ is
called a \emph{formal concept}. Here, $A$ is called the \emph{concept extent}
and $B$ is called the \emph{concept intent}. The set of all these formal
concepts, i.e.,
$\mathfrak{B}(\context)\coloneqq \{(A,B)\mid A\subseteq G,B\subseteq
M,A'=B,B'=A\}$ gives rise to an order structure $(\mathfrak{B},\leq)$ using
$(A,B)\leq (C,D) :\Leftrightarrow A\subseteq C$, called \emph{concept
  lattice}. For clone items we are particularly interested in the two entailed
closure systems, i.e, in the \emph{object closure system}
$\mathfrak{G}(\context)\coloneqq \{A\in G\mid (A,B)\in\mathfrak{B}(\context)\}$
and the \emph{attribute closure system} $\mathfrak{M}(\context)\coloneqq \{B\in
M\mid (A,B)\in\mathfrak{B}(\context)\}$. We may denote those by $\mathfrak{G}$
and $\mathfrak{M}$ whenever the according context is implicitly given.

\subsubsection{Clones}
\label{sec:clones}
Besides the original definition of what clone items are there will be some
graduations useful to graphs. We start with the common definition.
Given a formal context $\context=\GMI$ and two items $a,b\in M$, we say $a$
\emph{is clone to} $b$ in $\mathfrak{M}$ if $\forall X\in\mathfrak{M}\colon
\phi_{a,b}(X)\in\mathfrak{M}$, with:
\[\phi_{a,b}(X)\coloneqq
  \begin{dcases}
    X\setminus\{a\}\cup\{b\}&\text{if}\ a\in X\wedge b\not\in X\\
    X\setminus\{b\}\cup\{a\}&\text{if}\ a\not\in X\wedge b\in X\\
    X&\text{else}
  \end{dcases}\]

We may denote this property by $a\sim_{\context} b$ and whenever the context is
distinctive $a\sim b$. It is obvious that $\sim$ is a reflexive and symmetric
relation on $M\times M$. Actually, it is also transitive, which can be shown
easily, hence $\sim$ is an equivalence relation. Since every $a\in M$ is a
clone to itself we say an $a$ is a \emph{proper clone} iff there is a $b\in
M\setminus\{a\}$ such that $a\sim b$. In a not-clarified formal context there
might be some $m,n\in M, m\neq n$ such that $m'=n'$. Those elements are proper
clones. However, this is obvious and not revealing any hidden structure besides
the fact that two identical copies are present. Therefore we call a proper clone
$a\in M$ \emph{trivial} iff there is a $b\in M\setminus\{a\}$ with $a'=b'$.

A this point one may ask if it is hard to construct a formal context having a
significant number of non-trivial clones. This is very easy as the following
example discloses.

\begin{example}
  The nominal scales, i.e., $(\{1,\dotsc,n\},\{1,\dotsc,n\},=)$ and
  the contra-nominal-scale $(\{1,\dotsc,n\},\{1,\dotsc,n\},\neq)$
  provide formal contexts where every attribute element is a
  non-trivial clone.  Furthermore, the union of two formal contexts,
  i.e., $\context_{1}\coloneqq(G_{1},M_{1},I_{1})$ and
  $\context\coloneqq(G_{2},M_{2},I_{2})$ becomes
  $\context_{1}\cup \context_{2}\coloneqq (G_{1}\cup G_{2},M_{1}\cup
  M_{2},I_{1}\cup I_{2})$, preserves the clones from $\context_{1}$
  and $\context_{2}$.
\end{example}

% \begin{figure}[t]
%   \centering
%   \begin{cxt}
%     \cxtName{N-scale}
%     \att{$m_{1}$}
%     \att{$m_{2}$}
%     \att{$m_{3}$}
%     \obj{x..}{$g_{1}$}
%     \obj{.x.}{$g_{2}$}
%     \obj{..x}{$g_{3}$}
%   \end{cxt}
%   \begin{cxt}
%     \cxtName{CN-scale}
%     \att{$m_{1}$}
%     \att{$m_{2}$}
%     \att{$m_{3}$}
%     \obj{.xx}{$g_{1}$}
%     \obj{x.x}{$g_{2}$}
%     \obj{xx.}{$g_{3}$}
%   \end{cxt}
%   \caption{Small example contexts with all attribute elements being clone to
%     each other. In particular the nominal-scale (left) and the
%     contra-nominal-scale (right).}
%   \label{fig:example}
% \end{figure}

All the above can be defined similarly for elements of $G$ using the dual-context,
i.e., the context where objects and attributes are interchanged. We therefore
omit the explicit definitions and continue assuming the necessary definitions
are made. However, we may provide some wording to differentiate between clones
in $\mathfrak{M}$ and clones in $\mathfrak{G}$ for some formal context
$\GMI$. When necessary we call the former \emph{attribute clone} and the latter
\emph{object clone}.

\section{Theoretical observations}
\label{sec:theor-observ}
%\todo{move to preliminaries}
%We recall and introduce a few helpful notions: Given a context $\context=\GMI$, as usual, the mappings $\gamma:G\to \CL(\context), g\mapsto (g'', g')$ and $\mu:M\to \CL(\context), g\mapsto (m', m'')$ assign to each object $g$ its object concept $\gamma g$ (\ie the smallest concept containing $g$ in its extent) and to each attribute $m$ its attribute concept $\mu m$ (\ie, the largest concept containing $m$ in its intent).

In this section, we derive some crucial properties of clones as well
as a characterization of the clone property on the level of the
context table. These theoretical results allow a fast computation of
clones and help understanding the nature of clones in data. The first
shows that for attributes with $a\sim b$ the object sets $a'$ and $b'$
are incomparable.

\begin{lem}[Clones are incomparable]\label{the:clones-incomparable}
  Let $\context=\GMI$ be a formal context and $a,b\in M$.
  If $a\sim b$, then from $a'\subseteq b'$ follows $a'= b'$.
\end{lem}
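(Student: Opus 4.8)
The plan is to argue by contradiction and to reduce the whole statement to producing a single closed set on which $\phi_{a,b}$ fails to return a closed set, which would contradict $a\sim b$. If $a=b$ the claim is immediate, so I assume $a\neq b$ and suppose, for contradiction, that $a'\subseteq b'$ yet $a'\neq b'$, \ie $a'\subsetneq b'$. The natural witness to test the clone property against is the intent generated by an object that separates $a$ from $b$, and the key observation will be that the hypothesis $a'\subseteq b'$ is exactly what forces $b$ back into the closure after the swap.

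First I would pick an object $g\in b'\setminus a'$, which exists because the inclusion is strict. Set $X\coloneqq\{g\}'$. Since derivations of object sets are always closed, $X\in\mathfrak{M}$; moreover $g\in b'$ gives $b\in X$, and $g\notin a'$ gives $a\notin X$. Hence the second case of the definition of $\phi_{a,b}$ applies, so that $Y\coloneqq\phi_{a,b}(X)=(X\setminus\{b\})\cup\{a\}$. By the assumption $a\sim b$, we must have $Y\in\mathfrak{M}$, \ie $Y$ is closed ($Y=Y''$).

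The contradiction then comes from computing the closure of $Y$. Any object $h\in Y'$ possesses every attribute of $Y$, in particular $a$, so $h\in a'$; the hypothesis $a'\subseteq b'$ forces $h\in b'$, \ie $(h,b)\in I$. As this holds for every $h\in Y'$, I obtain $b\in Y''$. But $b\notin Y$ by construction (recall $a\neq b$), whence $Y\subsetneq Y''$ and $Y$ is not closed, contradicting $Y\in\mathfrak{M}$. The degenerate case $Y'=\emptyset$ is covered as well, since then $Y''=M\neq Y$.

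The only real subtlety, and the step I expect to be the crux, is the choice of the witnessing closed set: using the object intent $\{g\}'$ guarantees at once that we start inside $\mathfrak{M}$ and that $a$ and $b$ sit on the correct sides of the swap, while the inclusion $a'\subseteq b'$ is precisely what is needed to pull $b$ back into the closure of $Y$. Everything else is a routine application of the Galois-connection properties of $\cdot'$, so once the witness is fixed the argument is short.
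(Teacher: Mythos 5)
Your proof is correct and follows essentially the same strategy as the paper: apply $\phi_{a,b}$ to a single well-chosen intent that contains $b$ but not $a$, and show the swapped set cannot be closed because $a'\subseteq b'$ forces $b$ back into its closure. The only difference is the choice of witness --- you argue by contradiction using an object intent $g'$ with $g\in b'\setminus a'$, whereas the paper applies the swap directly to $b''$ and concludes $a\in b''$ (hence $b'\subseteq a'$) via the antitonicity argument $b''\subseteq a''$; the underlying mechanism is identical.
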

\begin{proof}
  Using $a'\subseteq b'$ we show $b'\subseteq a'$.
  We examine the mapping
  \begin{equation*}
    \phi_{ab}(b'')  =\begin{dcases}
      b'' & \text{if } a \in b'' \\
      b''\setminus\{b\}\cup \{a\} & \text{if}\ a\notin b''.\\
    \end{dcases}
  \end{equation*}
  We show that the second case is invalid. From $a\sim b$ and $\phi_{ab}(b'')$
  being a closure we deduce $a''\subseteq b''\setminus\{b\}\cup \{a\}$.  Since
  $a'\subseteq b'$, we have $b''\subseteq a''$ and together we yield
  $b\in b'' \subseteq a''\subseteq b''\setminus\{b\}\cup \{a\}$ contradicting
  the case. Hence, only the first case can exist, meaning $a \in b''$, thus
  obviously $b'\subseteq a'$.
\end{proof}

The next results indicates, that reducible elements of a formal context can be ignored in the search for clones.

\begin{lem}[Clone irreducability]
  \label{lem:clonirred}
  Let $\context=(G,M,I)$ be a clarified formal context and attributes $a,b\in M: a\neq b$ with  $a\sim b$.
  Then $a$ is irreducible in $\context$.
\end{lem}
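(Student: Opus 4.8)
The plan is to argue by contradiction. Suppose $a$ were reducible; then by definition there is a set $S\subseteq M\setminus\{a\}$ with $a'=S'$, so that the extent $a'$ is the intersection $S'=\bigcap_{s\in S}s'$ of other attribute extents. Equivalently $a''=S''$, hence $a''$ is a closed attribute set (an intent) containing $a$. The two facts I would extract first concern $b$. Since $\context$ is clarified we have $a'\neq b'$, and since $a\sim b$, \cref{the:clones-incomparable} forbids $a'\subseteq b'$. Consequently $b\notin a''$ (otherwise $a'\subseteq b'$) and likewise $b\notin S$ (otherwise $a'=S'\subseteq b'$).

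Next I would feed the intent $a''$ into the clone map. Because $a\in a''$ and $b\notin a''$, the first branch of $\phi_{a,b}$ applies, and since $a\sim b$ the set $Y\coloneqq\phi_{a,b}(a'')=a''\setminus\{a\}\cup\{b\}$ again lies in $\mathfrak{M}$, \ie, $Y$ is closed. By construction $a\notin Y$.

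The contradiction then comes from locating $S$ inside $Y$. As $S\subseteq a''$ contains neither $a$ nor $b$, every element of $S$ lies in $a''\setminus\{a\}\subseteq Y$, so $S\subseteq Y$. Applying the derivation reverses this inclusion to $Y'\subseteq S'=a'$, and applying it once more (the operator is antitone) yields $a''=(a')'\subseteq(Y')'=Y''=Y$, where the final equality uses that $Y$ is closed. Hence $a\in a''\subseteq Y$, contradicting $a\notin Y$. Thus no such $S$ exists and $a$ is irreducible.

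I expect the only genuinely delicate points to be the two incomparability observations $b\notin a''$ and $b\notin S$: these are exactly where the clarification hypothesis and \cref{the:clones-incomparable} must be combined, and without them $\phi_{a,b}$ would not select the swapping branch that produces the usable closed set $Y$. Everything else is a routine use of the Galois connection.
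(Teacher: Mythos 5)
Your proof is correct and takes essentially the same route as the paper's: both use clarification together with \cref{the:clones-incomparable} to conclude $b\notin a''$, apply $\phi_{a,b}$ to the intent $a''$ to obtain the closed set $a''\setminus\{a\}\cup\{b\}$, show the reducing set is contained in it, and then use antitonicity of the derivation to force $a''\subseteq a''\setminus\{a\}\cup\{b\}$, which is absurd for $a\neq b$. The only cosmetic differences are that you phrase the final contradiction as $a\in Y$ versus $a\notin Y$ rather than as $a=b$, and you note the (unneeded but harmless) fact $b\notin S$.
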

\begin{proof}
  Assume $a$ is reducible, \ie, there exists a set of attributes $N\subseteq M$ with $a\notin N$ and $\bigcap_{n\in N} n' =  a'$.
  As $\context$ is clarified, we have $a'\neq b'$, thus from Lemma~\ref{the:clones-incomparable} follows $b\notin a''$. Therefore $\phi_{a,b}(a'') = a''\setminus\{a\}\cup \{b\}$. From the reducibility assumption follows
  $$\forall n\in N: n'\supseteq a' \Longrightarrow n\in a'' \stackrel{n\neq a}{\Longrightarrow} n\in a''\setminus\{a\}\cup \{b\} = \phi_{a,b}(a'').$$
  Thus, $a'=\bigcap_{n\in N}{n'} \supseteq \phi_{a,b}(a'')'$, which means $a''\subseteq\phi_{a,b}(a'')=a''\setminus\{a\}\cup \{b\}$. Clearly, this means $a=b$ contradicting the lemma's assumption. %  Since $b\in \phi_{a,b}(a'')$ and the latter is an intent, we have $b''\subseteq\phi_{a,b}(a'')$. Moreover
\end{proof}

While clarifying a context removes the non-trivial clones, additionally reducing that context does not change the clone relationship any further. Therefore, for finding non-trivial clones it suffices considering reduced contexts. Next, we describe for such contexts how clones can be identified directly from the context's table.
\cite{Gely05} already found it is sufficient to check join-irreducible intents to check the clone property. The respective result there (Proposition 1) is formulated for the dual version of formal contexts, \ie, where $G$ and $M$ are interchanged. Also, for the proof the authors of \cite{Gely05} refer to a manuscript that had been submitted (at the time) but appears to have never been published. For the sake of completeness, we present a variation of their result in the common notion of a formal context and present a proof. Here, we already use the fact that in a reduced context, the join irreducible concepts are exactly the object concepts.

\begin{theorem}\label{th:irreducible-intents}
  Let $\context=(G,M,I)$ be a reduced formal context and $a,b\in M$ with $a\neq b$. The following are equivalent:\begin{compactenum}
  \item   $a\sim b$ \label{th:irreducible-intents:clones}
    \item For each object $g\in G$, there is an object $h\in G$ such that $\phi_{a,b}(g') = h'$.\label{th:irreducible-intents:irreducible}
  \end{compactenum}
\end{theorem}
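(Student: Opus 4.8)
The plan is to first recast the map $\phi_{a,b}$ in a more transparent form. I would observe that $\phi_{a,b}$ is nothing but the direct image map of the transposition $\tau=(a\,b)$ on $M$: checking the three cases of its definition shows $\phi_{a,b}(X)=\{\tau(x)\mid x\in X\}$ for every $X\subseteq M$. Consequently $\phi_{a,b}$ is an involutive automorphism of the Boolean lattice $(\mathcal{P}(M),\subseteq)$; in particular it is a bijection that preserves arbitrary unions and intersections and is its own inverse. This observation does the heavy lifting for both implications and frees me from repeatedly arguing by cases.

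For $\ref{th:irreducible-intents:irreducible}\Rightarrow\ref{th:irreducible-intents:clones}$ I would recall that every intent is the intersection of the object intents below it: for $X\in\mathfrak{M}$ we have $X=X''=\bigcap_{g\in X'}g'$. Applying $\phi_{a,b}$ and using that it commutes with intersections yields $\phi_{a,b}(X)=\bigcap_{g\in X'}\phi_{a,b}(g')$. By hypothesis each $\phi_{a,b}(g')$ equals some object intent $h'$, hence lies in $\mathfrak{M}$; since a closure system is closed under intersection, $\phi_{a,b}(X)\in\mathfrak{M}$. As $X$ was an arbitrary intent, this is precisely $a\sim b$. Note that this direction uses neither clarification nor reduction.

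For $\ref{th:irreducible-intents:clones}\Rightarrow\ref{th:irreducible-intents:irreducible}$ I would argue that $a\sim b$ forces $\phi_{a,b}(\mathfrak{M})\subseteq\mathfrak{M}$, and, being an involution, $\phi_{a,b}$ then restricts to a bijection of $\mathfrak{M}$ onto itself that preserves inclusion and intersection, i.e.\ a lattice automorphism of the closure system $(\mathfrak{M},\subseteq)$. Lattice automorphisms map meet-irreducible elements to meet-irreducible elements. Here the key structural input enters: in a reduced context the meet-irreducible intents are exactly the object intents $\{g'\mid g\in G\}$ (dually, the join-irreducible concepts are the object concepts). Hence $\phi_{a,b}(g')$ is again a meet-irreducible intent, so $\phi_{a,b}(g')=h'$ for some $h\in G$, which is the claim.

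The step I expect to be the main obstacle is not the manipulation of $\phi_{a,b}$---once it is identified with the image map of $\tau$ everything is formal---but justifying cleanly that the object intents coincide with the meet-irreducible elements of $\mathfrak{M}$, since this is exactly where reducedness is indispensable (an unreduced context has object intents that are not meet-irreducible, and for those $\phi_{a,b}(g')$ need not be an object intent). I would also make sure to handle the top element $M$ correctly: it is the only intent arising as the empty intersection of object intents, it is fixed by $\phi_{a,b}$ because $a,b\in M$, and in a reduced context it is \emph{not} an object intent, so it causes no trouble in either direction.
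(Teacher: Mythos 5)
Your proof is correct, and it shares the two structural pillars of the paper's own proof---every intent is an intersection of object intents, and in a reduced context the irreducible intents are exactly the object intents (a fact the paper states without proof right before the theorem, in the dual formulation that join-irreducible concepts are the object concepts)---but your execution is genuinely different. The paper argues concretely in both directions: for \ref{th:irreducible-intents:clones}$\Rightarrow$\ref{th:irreducible-intents:irreducible} it writes $\phi_{a,b}(g')$ as $H'$, partitions $H$ into $H_a$ and $H_{\bar a}$, derives $g'=\bigcap_{h\in H_a}h'\cap\bigcap_{h\in H_{\bar a}}\phi_{a,b}(h')$ through a chain of set manipulations, and invokes the irreducibility of $g$ to pick out a concrete $h$ with $g'=\phi_{a,b}(h')$; for the converse it runs an analogous computation with $H_b$, $H_{\bar b}$ and non-emptiness arguments. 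Your identification of $\phi_{a,b}$ as the direct-image map of the transposition $(a\,b)$, hence a complete involutive automorphism of $(\mathcal{P}(M),\subseteq)$, collapses all of this case-tracking: the direction \ref{th:irreducible-intents:irreducible}$\Rightarrow$\ref{th:irreducible-intents:clones} becomes ``$\phi_{a,b}$ commutes with the intersections in $X=\bigcap_{g\in X'}g'$'', and the direction \ref{th:irreducible-intents:clones}$\Rightarrow$\ref{th:irreducible-intents:irreducible} becomes ``$\phi_{a,b}$ restricts to an order automorphism of $(\mathfrak{M},\subseteq)$, automorphisms preserve meet-irreducibility, and meet-irreducibles are object intents''; your explicit treatment of the top element $M$ (fixed by $\phi_{a,b}$, never an object intent in a reduced context) closes the one loophole this abstraction opens. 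What the paper's computation buys is self-containedness at the level of the context table, exhibiting the witness $h$ directly; what yours buys is brevity, robustness against bookkeeping errors, and immediate generality: the same argument applies verbatim to an arbitrary permutation $\sigma$ of $M$ whose image map preserves $\mathfrak{M}$, showing that any such $\sigma$ permutes the object intents of a reduced context---which is precisely the setting of the clone permutations and higher-order clones in Section~\ref{sec:higher-order-clones}. One caveat: both you and the paper lean on the unproved equivalence ``reduced $\Rightarrow$ object intents $=$ irreducible intents''; you correctly flag this as the place where reducedness (together with clarification) is indispensable, and a fully self-contained write-up should include its short proof.
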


\begin{proof}
  First we show, $\ref{th:irreducible-intents:clones}.\Longrightarrow\ref{th:irreducible-intents:irreducible}.$
  For $a,b\in g'$ or $a,b\notin g'$, the claim is obvious (using $h\coloneqq g$). Without loss of generality, we can assume $a\in g'$ and $b\notin g'$, thus $\phi_{a,b}(g') = g'\setminus\{a\}\cup\{b\}$.

%  We show that there is an object $h\in H$ that has the required properties.
  As $\phi_{a,b}(g')$ is an intent, there exists a set of objects $H\subseteq G$ with $H'=\phi_{a,b}(g')=g'\setminus\{a\}\cup\{b\}$.
 We can partition $H$ into $H_{a}\coloneqq \{h\in H\mid a\in h'\}$ and $H_{\bar{a}}\coloneqq \{h\in H\mid a\notin h'\}$.
% In particular, this means $b'\supseteq H$ and therefore, we can partition $H$ into $H_{a,b}\coloneqq H\cap G_{a,b}$ and $H_{\bar{a},b}\coloneqq H\cap G_{\bar{a},b}$.
%
As clearly $a\notin \phi_{a,b}(g')$,  $H_{\bar{a}}$ cannot be empty.
  We yield:
  \begin{align*}
                           &                               & g'\setminus\{a\}\cup \{b\}                  & =  \bigcap_{h\in H_{a}}{h'} \cap \bigcap_{h\in H_{\bar{a}}}{h'} \\
     H_{\bar{a}}\neq\emptyset, b\notin g' &\Longrightarrow & g'\setminus\{a\}  & =  \bigcap_{h\in H_{a}}{h'} \cap \bigcap_{h\in H_{\bar{a}}}{(h'\setminus{\{b\}})}\\
     % H_{\bar{a}}\neq\emptyset &\Longrightarrow & (g'\setminus\{a\}\cup b)\setminus\{b\}  & =  \bigcap_{h\in H_{a}}{h'} \cap \bigcap_{h\in H_{\bar{a}}}{(h'\setminus{\{b\}})}\\
     % b\notin g' &\Longrightarrow & g'\setminus\{a\}  & =  \bigcap_{h\in H_{a}}{h'} \cap \bigcap_{h\in H_{\bar{a}}}{(h'\setminus{\{b\}})}\\
     a\in g' &\Longrightarrow & g' & =  \bigcap_{h\in H_{a}}{h'} \cap \bigcap_{h\in H_{\bar{a}}}{(h'\setminus{\{b\}\cup \{a\}})}\\
%     H_{a,b}\subseteq G_{a,b} &\Longrightarrow & g' & =  \bigcap_{h\in H_{a}}{h'} \cap \bigcap_{h\in H_{\bar{a}}}{(h'\setminus{\{b\}\cup \{a\}})}\\
     b\in\phi_{a,b}(g')=H' &\Longrightarrow & g' & =  \bigcap_{h\in H_{a}}{h'} \cap \bigcap_{h\in H_{\bar{a}}}{\phi_{a,b}(h')}
%     H_{\bar{a},b}\subseteq G_{\bar{a},b} &\Longrightarrow & g' & =  \bigcap_{h\in H_{a}}{h'} \cap \bigcap_{h\in H_{\bar{a}}}{\phi_{a,b}(h')}
  \end{align*}
As $g$ is irreducible, we either have an object $h\in H_{a}$ with $g'=h'$ or an object $h\in H_{\bar{a}}$ with $g'=\phi_{a,b}(h')$.
Clearly, the former cannot be true, as $b\in h'$ for $h\in H$ and $b\notin g'$.
From the latter follows $\phi_{a,b}(g')=h'$.

Next, we show $\ref{th:irreducible-intents:irreducible}.\Longrightarrow\ref{th:irreducible-intents:clones}:$
Let $N\subseteq M$ be an intent of $\context$, \ie, there is a set of objects $H\subseteq G$ such that $N=H'$. We show that $\phi_{a,b}(N)$ is an intent. This is trivial for the cases $a,b\in N$ and $a,b\notin N$. Without loss of generality, we assume $a\in N$ and $b\notin N$.
Then \[\phi_{a,b}(N) = \phi_{a,b}(H') = H'\setminus\{a\}\cup\{b\} =\bigcap_{h\in
    H_{b}}(h'\setminus\{a\}\cup\{b\})\cap\bigcap_{h\in H_{\bar b}}(h'\setminus\{a\}\cup\{b\})\]
with $H_{b}$ and $H_{\bar{b}}$ defined as $H_{a}$ and
$H_{\bar a}$.
For $h\in H_{b}$ it holds $h'\setminus\{a\}\cup\{b\}=h'\setminus\{a\}$. % for $h\in H_{b}$
% we get $\bigcap_{h\in H_{b}}h'\setminus\{a\}\cup\{b\}=\bigcap_{h\in
%   H_{b}}h'\setminus\{a\}$.
As $H_{\bar b}\neq\emptyset$ (a.p., $b\not\in N=H'$) we yield
$\phi_{a,b}(N)=\bigcap_{h\in H_{b}}h'\cap\bigcap_{h\in H_{\bar
    b}}(h'\setminus\{a\}\cup\{b\})\,.$ Since $a\in H'$, for
$h\in H_{\bar{b}}$: $h'\setminus\{a\}\cup\{b\} = \phi_{a,b}(h')$,
which by~\ref{th:irreducible-intents:irreducible}. is $g'$ for some
$g\in G$. Thus $\phi_{a,b}(N)$ is the intersection of intents and
therefore itself an intent.

% Since $a\in N$, for each $h\in H$ the set  $h'\setminus\{a\}\cup\{b\}= \phi_{a,b}(h')$ is an intent of $\context$ and therefore $\phi_{a,b}(N)$ is an intent as well.
\end{proof}

The theorem characterizes clones on the context level: Two attributes $a$ and $b$ are clones if for each object $g\in G$ whose row contains only one of the two attributes, there is another object $h\in G$ such that its row contains only the other of the two attributes, while the remaining parts of the rows are identical, \ie, $g'\setminus\{a\} = h'\setminus\{b\}$.

%\subsection{Cloning process}
%\label{sec:cloning}
%Cloning attributes
%\begin{itemize}
%\item Put in procedure like:
%  \begin{itemize}
%  \item Select attribute to clone to (say m)
%  \item Select object (say g)
%  \item Replicate m to m'
%  \item Replicate g to g'
%  \end{itemize}
%  \begin{description}
%  \item[Case 1:] $(g,m)\in I$
%    \begin{itemize}
%    \item Trivial clone -> full relation or empty relation
%    \item non-trivial Clones - > Nominal/Contranominal
%    \end{itemize}
%  \item[Case 2:] $(g,m)\not in I$
%    \begin{itemize}
%    \item ...
%    \end{itemize}%
%
%  \end{description}
%\end{itemize}

%%% Local Variables:
%%% mode: latex
%%% TeX-master: "paper"
%%% End:

\subsection{Clones in Graph Data}
\label{sec:experiments}
\subsubsection{Clones in Social Networks}
\label{sec:clon-soci-netw}
In the following, we identify any given graph with the formal context
counterpart $\context=(U,W,I)$, as
in~\cref{sec:form-conc-analys}. Transferring the definitions from
Section~\ref{sec:clones}, we obtain what clones in graphs, in
particular in social networks, are. For the special case of social
networks we call object clones \emph{user clones} and attribute clones
are either some \emph{property clone}, in the bipartite case, or also
user clones, in the single-mode case.

\begin{example}[Social Network]
  In \cref{fig:toyexample} we show a small artificial example of a
  possible social network. Represented as context as described
  in~\cref{sec:form-conc-analys} we get with
  $M=\{\text{\textbf{S}wimming, \textbf{H}iking, \textbf{B}iking,
    \textbf{R}afting, \textbf{J}ogging}\}$ the closure system
  $\mathfrak{M}(\context)=\{\{\text{S}\},\{\text{H}\},\{\text{B}\},
  \{\text{R}\},\{\text{J}\},\{\text{B,R}\}\},
  \{\text{B,J}\},\{\text{R,J}\}\}$. The associated clone classes are
  denoted in~\cref{fig:toyexample}.
\end{example}

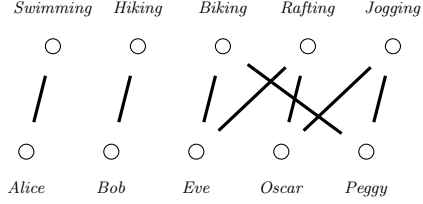
\begin{figure}[t]
  \centering
  \begin{tikzpicture}[scale=0.7,transform shape,node distance=.2cm and 3cm]
    \begin{scope}[individual/.style =
      {draw, circle, inner sep = 0.1cm}, label
      distance=0.3cm]
      \node[individual] (userA) [label=below:{\textit{Alice}},
      label=above:{\parbox{5cm}{}}]{};
      \node[individual] (userB) [right=1.3cm of userA,
      label=below:{\textit{Bob}}]{};
      \node[individual] (userC) [right=1.3cm of userB,
      label=below:{\textit{Eve}}]{};
      \node[individual] (userD) [right=1.3cm of userC,
      label=below:{\textit{Oscar}}]{};
      \node[individual] (userE) [right=1.3cm of userD,
      label=below:{\textit{Peggy}}]{};

      \node[individual] (h1) [xshift=0.5cm,above=1.7cm of userA,
      label=above:{\textit{Swimming}}]{};
      \node[individual] (h2) [xshift=0.5cm,above=1.7cm of
      userB,label=above:{\textit{Hiking}}]{};
      \node[individual] (h3) [xshift=0.5cm,above=1.7cm of
      userC,label=above:{\textit{Biking}}]{};
      \node[individual] (h4) [xshift=0.5cm,above=1.7cm of
      userD,label=above:{\textit{Rafting}}]{};
      \node[individual] (h5) [xshift=0.5cm,above=1.7cm of
      userE,label=above:{\textit{Jogging}}]{};
    \end{scope}

    \begin{scope}[very thick]
      \draw (userA) edge[shorten >=0.3cm,shorten <=0.3cm] (h1);
      \draw (userB) edge[shorten >=0.3cm,shorten <=0.3cm] (h2);
      \draw (userC) edge[shorten >=0.3cm,shorten <=0.3cm] (h3);
      \draw (userC) edge[shorten >=0.3cm,shorten <=0.3cm] (h4);
      \draw (userD) edge[shorten >=0.3cm,shorten <=0.3cm] (h4);
      \draw (userD) edge[shorten >=0.3cm,shorten <=0.3cm] (h5);
      \draw (userE) edge[shorten >=0.3cm,shorten <=0.3cm] (h5);
      \draw (userE) edge[shorten >=0.3cm,shorten <=0.3cm] (h3);
    \end{scope}
  \end{tikzpicture}
  \caption{Example of a social network graph exhibting various clones itmes.
    Edges connect a person with his or her activity. Equivalence
    classes for attribute clones are
    $\{\text{Swimming,Hiking}\},\{\text{Biking, Rafting, Jogging}\}$.}
  \label{fig:toyexample}
\end{figure}

\subsubsection{Data set description}
\label{sec:data-set-description}
Almost all of the following data sets can be obtained from the UCI Machine
Learning Repository~\cite{UCI}. We consider nine social network graphs and two
non social network data sets:
\begin{inparadesc}
\item[zoo]\cite{UCI}: 101 animals and seventeen attributes (fifteen Boolean and
  two numerical). All attributes were nominal scaled, resulting in a set with 43 attributes;
\item[cancer]\cite{wdbc}: 699 instances of breast cancer diagnoses with ten numerical attributes, which were nominal scaled;
\item[facebooklike]\cite{opsahl2009clustering}: 337 forum users with 522 topics they communicated on;
\item[southern]\cite{wasserman1994}: classical small world social network consisting of fourteen
  woman attending eighteen different events;
\item[club]\cite{brunson_club-membership}: 25 corporate executive officers and fifteen
  social clubs in which they are involved in;
\item[movies]\cite{faulkner2003music}: 39 composers of film music and their relations to 62
  producers;
\item[aplnm]\cite{Borchmann2017}: 79 participants of the \emph{Lange Nacht
    der Musik} in 2013 and the 188 events they participated in;
\item[jazz]\cite{gleiser03}: 198 jazz musicians and their collaborations;
\item[dolphin]\cite{Lusseau2003}: 62 bottlenose dolphins with contacts amongst each other;
\item[hightech]\cite{nr}: Some (one-mode) social network with 33 users
  from within the parameters of a social network but with no further
  insights provided;
\item[wiki]\cite{nr,leskovec2010signed}: 764 voters on Wikipedia with 605 users to be voted on.
\end{inparadesc}

For comparison, we also investigate randomized versions of all those data sets, generated using a coin draw process. This may imply that the resulting formal
contexts are prone to the stegosaurus phenomenon. However, no unbiased method
for generating formal context for a given number of objects, attributes, and
density is known~\cite{Borch16}.

\subsubsection{Computation}
Computing the attribute (object) clones for a given formal context $\GMI$ would
imply to know the associated attribute (object) closure system. However,
computing those is computational infeasible for contexts of a particular size or
greater. To cope with this barrier we
utilize~\cref{lem:clonirred} and~\cref{th:irreducible-intents}. Hence, instead of checking
all elements of a closure system we only need to check the
irreducibles. Therefore we checked brute force all combinations of attributes
(objects) for every given data set by checking the
according irreducibles.

In particular we computed for every data set the number of trivial and non-trivial
object clones, and attribute clones. The results are shown
in~\cref{tab:nets}. In addition we also computed the number of trivial and
non-trivial clones for the object/attribute-projections for every formal
context. However, besides creating more trivial clones no further
insights could be grasped from this. Also, the experiment on randomly
generated formal contexts had not different outcome. Therefore we
omitted presenting the particular results for the latter two.

\begin{table}[t]
  \centering
  {\small
    \caption{Properties of the considered (social) networks and data sets and
      results for clone experiment. With $G$-t we denote trivial clones
      whereas clones denote non-trivial clones. }
  \begin{tabular}{l|ccccccc}
    Name&$|U|$&$|M|$&density&\# $G$-clones&\# $M$-clones&\#  $G$-t-clones&\# $M$-t-Clones\\\midrule
    zoo&101& 43&0.390&0&0&42&2\\
%    zoo-r&101& 43&0.385&0&0&0&0\\\midrule
    cancer&699&92&0.110&0&0&236&0\\
%    cancer-r&699&92&0.106&0&0&0&0\\\midrule
    facebooklike\, &377&522&0.014&7&0&24&83\\
%    facebooklike-r\, &377&522&0.014&0&0&0&0\\\midrule
    southern&18&14&0.352&0&0&1&1\\
%    southern-r&18&14&0.309&0&0&0&0\\\midrule
    aplnm&79&188&0.061&0&0&1&21\\
%    aplnm-r&79&188&0.056&0&0&0&0\\\midrule
    club&25&15&0.250&0&0&0&0\\
%    club-r&25&15&0.261&0&0&0&0\\\midrule
    movies&62&39&0.079&0&0&1&0\\
%    movies-r&62&39&0.074&0&0&5&0\\\midrule
    jazz&198&198&0.068&7&7&0&0\\
%    jazz-r&198&198&0.068&0&0&0&0\\\midrule
    dolphins&62&62&0.082&0&0&2&2\\
%    dolphins-r&62&62&0.058&0&0&4&3\\\midrule
    hightech&33&33&0.148&0&0&1&1\\
%    hightech-r&33&33&0.149&0&0&0&0\\\midrule
    wiki&764&605&0.006&234&234&73&30\\
%    wiki-r&764&605&0.006&0&0&21&2\\\midrule
  \end{tabular}}
  \label{tab:nets}
\end{table}

\subsubsection{Discussion}
The most obvious result for all data sets alike is that non-trivial clones are
very infrequent. Omitting the wiki data set only two data sets have clones at
all, in particular a very small number of object clones compared to the size of
the network. We investigated the exception by the wiki data set further and
discovered a large nominal scale as subcontext responsible for the vast amount
of clones. Since the wiki data set is the result of a collection of voting
processes this would represent single votes. For trivial clones we have diverse
observations. Some networks like facebooklike have a significant amount of
trivial clones. Others of comparable size, however, do not, like jazz. Since those
clones do not reveal any hidden structure but the fact that copies of users or
properties are present in the network, we consider these clones uninteresting.

For the object and attribute projections we obtain almost the same
results. Almost no non-trivial clones are present. Though, the number of trivial
clones has increased in almost all the networks. This could be another
indication that simple one-mode projections are insufficient for analyzing
bipartite networks.

All in all, the notion of non-trivial clones seems insufficient for
the investigation of graphs. The explanation for this is that the
structural requirements for two attributes being clone are too strong,
cf. theoretical results in~\cref{sec:theor-observ}.  However, it
strikes the question if there is a generalization which is softening
those requirements while preserving enough structure.

% \begin{table}[t]
%   \centering
%   {\small
%     \caption{Properties of the projected (social) networks and data sets. $V$
%       denotes the set of vertices. The suffix $\pi_{G}$ represents a projection on
%       $G$ whereas $\pi_{M}$ denotes a projection on $M$.}
%     \begin{tabular}{l|ccccc}
%       Name&$|V|$&edge density&\# clone-pairs&\# trivial clones\\\midrule
%       zoo:$\pi_{G}$&101&1&0&100\\
%       zoo:$\pi_{M}$&43&0.782&0&2\\\midrule
%       cancer:$\pi_{G}$&699&0.860&0&296\\
%       cancer:$\pi_{M}$&92&0.660&0&0\\\midrule
%       facebooklike:$\pi_{G}$&377&0.015&7&25\\
%       facebooklike:$\pi_{M}$&522&0.114&6&87\\\midrule
%       southern:$\pi_{G}$&18&0.913&0&13\\
%       southern:$\pi_{M}$&14&0.744&0&11\\\midrule
%       aplnm:$\pi_{G}$&79&0.442&0&1\\
%       aplnm:$\pi_{M}$&188&0.390&0&21\\\midrule
%       club:$\pi_{G}$&25&0.870&2&16\\
%       club:$\pi_{M}$&15&0.653&0&0\\\midrule
%       movies:$\pi_{G}$&62&0.288&0&1\\
%       movies:$\pi_{M}$&39&0.313&0&0\\\midrule
%       jazz:$\pi_{G}$&198&0.680&0&27\\
%       jazz:$\pi_{M}$&198&0.680&0&27\\\midrule
%       dolphins:$\pi_{G}$&62&0.312&0&2\\
%       dolphins:$\pi_{M}$&62&0.312&0&2\\\midrule
%       hightech:$\pi_{G}$&33&0.592&0&1\\
%       hightech:$\pi_{M}$&33&0.592&0&1\\\midrule
%       wiki:$\pi_{G}$&764&0.056&255&96\\
%       wiki:$\pi_{M}$&605&0.040&232&43\\\midrule
%     \end{tabular}}
%   \label{tab:projections}
% \end{table}

\section{Generalized Clones}
\label{sec:higher-order-clones}
The results from the previous section motivate finding a more general clone
notion for formal contexts. In~\cite{Gely05} the authors provided an interesting
generalization of clones in a formal context. They proposed $P$-Clones, \ie,
clones with respect to the family of pseudo intents, and $A$-Clones, \ie, clones
in a particular kind of atomized context. Both approaches are based on using
some kind of modified family of sets. Another course of action was taken
in~\cite{Macko12}, in which the author used a measure of \textquote{cloneniness}
based on the number of incorrect mapped sets.
We take a different approach, using the original set of closures -- the intents -- based on the following observation.
\begin{remark}[Clone permutation]
  Every pair $(a,b)$ of elements $a,b\in M$ with $a\sim b$ for a given formal
  context $\GMI$ gives rise to a permutation
  $\sigma:M\to M, m\mapsto \sigma(m)$, with $\sigma(a) = b,\ \sigma(b)=a,$ and $\sigma(m)=m$ for $ m\in M\setminus\{a,b\}$. We denote such permutations
  as \emph{clone permutations}.
\end{remark}

Since for every $a\in M$ we have $a\sim a$, the set of clone
permutations $S$ for a given formal context $\GMI$ contains the
identity. For any two elements $a,b\in M$ with $a\sim b$ we can
represent the associated clone permutation $\sigma$ by
$\sigma\coloneqq(ab)$ using the reduced cycle notation. From this we
note that the set of all pairs of proper clones corresponds to a
particular subset of permutations on $M$ where every permutation
$\sigma$ contains exactly one two-cycle. This gives rise to two
possible generalizations. Both associated computational problems
require sophisticated algorithms to be developed.

\subsubsection*{Multiple two-cycles}
\label{sec:longer-clones}
We motivate this approach using the lattice for a closure system on
$M=\{a,b,c,d\}$ represented in~\cref{fig:cf3} (left). In this closure
system there are no proper clones. However, we can find a permutation
$\sigma$ that preserves the closure system. For example, the
permutation $\sigma=(ab)(cd)$, which is a permutation of two disjoint
cycles of length two. This permutation is not representable by exactly
one cycle of length two. Hence, we propose permutations representable
as products of cycles of length two as one generalization of
clones. Yet, this immediately gives rise to the idea of higher order
permutations.

% \begin{figure}[tb]
%   \centering

%   \caption{Example of a clone-free closure system on four attributes. }
%   \label{fig:cf4}
% \end{figure}

\subsubsection*{Higher Order}
\label{sec:higher-order}
Again, we want to motivate this generalization by providing an
example. In~\cref{fig:cf3} (middle), we show the lattice for a closure system
$M=\{a,b,c,d\}$. This closure system is free of (proper) clones. However, we
find a permutation $\sigma=(ab)(cd)$ in the above described manner. In addition
we find a permutation of order four, \ie, $\sigma^{4}=\mathop{id}$, preserving
the closure system, e.\,g., $\sigma=(acbd)$. In the same figure on the right we
observe a permutation of order five, i.e., $\sigma=(acedb)$, answering the
natural question for a permutation with odd order.

\begin{figure}[tb]
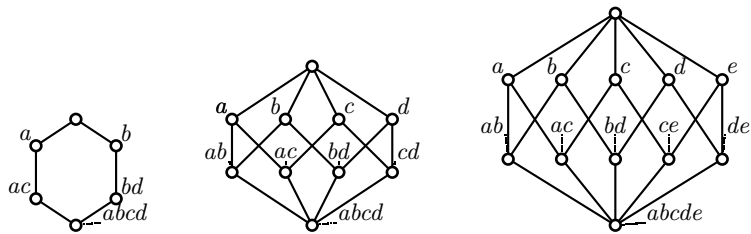

  \centering
  \input{lattice_examples.tex}\hspace{1cm}
  \input{lattice_crown4.tex}\hspace{1cm}
  \input{lattice_crown5.tex}
  \caption{Example for clone-free closure system on four attributes
    (left, middle) and on five attributes (right).}
  \label{fig:cf3}
\end{figure}

\section{Conclusion}
\label{sec:conclusion}
While starting the investigation the authors of this work were
confident to discover clones in graph data sets, at least for graphs of
a particular minimal size. In order to cope with the computational
complexity of closure systems we utilized results from~\cite{Gely05}
and expressed them in terms of statements about formal
contexts. However, our investigation did reveal the absence of clones
in real world graph like data. The only significant observation was
the emergence of trivial clones while projecting bipartite social
networks to one set of nodes.%  Yet, another reason to avoid one-mode
% projections for analyzing social network data.

This setback, though, led us to discover two more general notions of
clones, which can cope with more structural
requirements. Investigating those more thoroughly should be the next
step in clone related research, building on the theoretical results
in~\cref{sec:theor-observ}. To this end, we finish our work with
the following three open questions. \textbf{Question~1:} To which graph
theoretical notion could the idea of clone permutation correspond
to? \textbf{Question~2:} Does the set of all valid clone permutations
on a closure set always form a group and if no, why
not? \textbf{Question~3:} If yes, can this group provide new insights
into the structure of closure systems or of social networks?

\sloppy
{\small
\subsubsection*{Acknowledgments}
\label{sec:acknowledgments}

This work was funded by the German Federal Ministry
of Education and Research (BMBF) in its program
``Forschung zu den Karrierebedingungen
und Karriereentwicklungen des Wissenschaftlichen Nachwuchses
(FoWiN)'' under Grant 16FWN016.}

\medskip

\sloppy

\printbibliography

\end{document}